\newcommand{\myboldmath}{}
\newcommand{\defn}[1]{{\textit{\textbf{\myboldmath #1}}}}
\newcommand{\parhead}[1]{\noindent{\textbf{#1.}\xspace}}
\newcommand{\arrival}{t}
\newcommand{\departure}{d}
\newcommand{\window}{w}
\newcommand{\client}{c}
\newcommand{\channel}{R}
\newcommand{\schedule}{S}
\newcommand{\CH}{\mathscr{R}\xspace}
\newcommand{\CL}{\mathscr{C}\xspace}
\newcommand{\preemptive}{Preemptive Reallocation\xspace}
\newcommand{\lazy}{Lazy Reallocation\xspace}
\newcommand{\constant}{Classified Reallocation\xspace}
\begin{document}

\title{
Dynamic Windows Scheduling with Reallocation~\thanks{This work was supported in part by the National Science Foundation (CCF- 0937829, CCF-1114930) and Kean University UFRI grant.}
}

\titlerunning{Dynamic Windows Scheduling with Reallocation}  
%
\author{
   Mart\'{\i}n~Farach-Colton\inst{1}
   \and
   Katia~Leal\inst{2}
   \and
   Miguel~A.~Mosteiro\inst{3}
   \and
   Christopher Thraves\inst{2}
}

\authorrunning{M.~Farach-Colton,~K.~Leal,~M.~A.~Mosteiro,~C.~Thraves} 
\institute{
Dept. of Computer Science, Rutgers University, Piscataway, NJ, USA \& Tokutek Inc.,\\
\email{farach@cs.rutgers.edu}
\and
GSyC, Universidad Rey Juan Carlos, Madrid, Spain,\\
\email{\{katia,cbthraves\}@gsyc.es}
\and
Dept. of Computer Science, Kean University, Union, NJ, USA,\\
\email{mmosteir@kean.edu}
} 

\maketitle

\begin{abstract}

We consider the Windows Scheduling problem. The problem is a restricted version of Unit-Fractions Bin Packing, and it is also called Inventory Replenishment in the context of Supply Chain. In brief, the problem is to schedule the use of communication channels to clients. Each client $\client_i$ is characterized by an active cycle  and a window $\window_i$. During the period of time that any given client $\client_i$ is active, there must be at least one transmission from $\client_i$ scheduled in any $\window_i$ consecutive time slots, but at most one transmission can be carried out in each channel per time slot. The goal is to minimize the number of channels used. 

We extend previous online models, where decisions are permanent, assuming that clients may be reallocated at some cost. We assume that such cost is a constant amount paid per reallocation. That is, we aim to minimize also the number of reallocations. We present three online reallocation algorithms for Windows Scheduling. We evaluate experimentally these protocols showing that, in practice, all three achieve constant amortized reallocations with close to optimal channel usage. Our simulations also expose interesting trade-offs between reallocations and channel usage. 
We introduce a new objective function for WS with reallocations, that can be also applied to models where reallocations are not possible. We analyze this metric for one of the algorithms which, to the best of our knowledge, is the first online WS protocol with theoretical guarantees that applies to scenarios where clients may leave and the analysis is against current load rather than peak load. 
Using previous results, we also observe bounds on channel usage for one of the algorithms. 

\end{abstract}

\keywords{
Reallocation Algorithms,
Windows Scheduling,
Radio Networks,
Unit Fractions Bin Packing
}

\section{Introduction}

We study the Windows Scheduling\footnote{Were it not for the extensive literature on windows scheduling, we would have chosen a different name. After all, the goal of windows scheduling is not to schedule windows but transmissions. Moreover, those transmissions need not be scheduled within fixed epochs, as the term \emph{window} seems to suggest. Nonetheless, we use the current notation for literature consistency.} problem (WS)~\cite{bar2003windows,chan2005temporary,bar2007windows}, 
which is a restricted version of Unit-Fractions Bin Packing (UFBP)~\cite{bar2007windows,balogh2012line,balogh2012semi,han2010dynamic,chan2008dynamic,ivkovic1998fully},
and it is also called Inventory Replenishment in the context of Supply Chain~\cite{yu2013online}.
In brief, the WS problem is to schedule the use of communication channels to clients. Each client $\client_i$ is characterized by an active cycle (with arrival and departure times) and a window $\window_i$ also called laxity\footnote{Throughout the paper, we will use the term \emph{laxity} instead of \emph{window} since it conveys better the concept of maximum delay between transmissions. Nonetheless, we will also use the term \emph{window} when describing the related literature.} (in the context of job scheduling). During the period of time that any given client $\client_i$ is active, there must be at least one transmission from $\client_i$ scheduled in any $\window_i$ consecutive time slots. The optimization criterion is to minimize the number of channels used.
The WS problem appears in many areas such as Operations Research, Networks, Streaming, etc. More application details can be found in~\cite{bar2003windows,chan2005temporary,bar2007windows,yu2013online} and the references therein.

Given that even a restricted version of the WS problem was shown to be NP-hard even for one channel~\cite{NPh}, practical WS solutions are only approximations.
In the WS literature, competitive analysis of the ratio between the number of channels used by an online algorithm with respect to an optimal algorithm has been carried out in two flavors: the maximum ratio for any given time instant (also called \emph{against current load}), and as the ratio of the maxima (also called \emph{against peak load}).
In the model of~\cite{bar2003windows,bar2007windows} both competitive ratios are the same because clients do not leave the system. 
In~\cite{chan2005temporary,yu2013online} the competitive ratio is against peak load. 
In the present work, we carry out competitive analysis against current load. 

In~\cite{bar2003windows,bar2007windows} clients do not leave the system whereas in~\cite{chan2005temporary,yu2013online} clients may leave. 
We further extend the model assuming that clients may be reallocated at some cost. 
As used in~\cite{bender2013reallocation} (for Job Scheduling), we call this class of protocols \defn{Reallocation Algorithms.}  
Reallocation algorithms are a middle ground between online algorithms (where assignments are final) and offline algorithms which can be used repeatedly if reallocations are free.
Reallocation has been studied previously for Load Balancing~\cite{westbrook2000load,sanders2004migration}, and for UFBP in~\cite{balogh2012line,balogh2012semi} where it was called \emph{semi-online algorithms}. 
In~\cite{ivkovic1998fully}, also for UFBP, the reallocation cost paid is only computation time.
Preemptive Job Scheduling has also been studied assuming that a cost for preemption is paid~\cite{heydari2010minimizing,shachnai2005preemption,liu2004minimizing} but preempted jobs in this model are left idling.
In the present work, we assume that the cost of reallocation is a constant amount. That is, we aim to minimize the number of reallocations.

In this paper, we present three protocols for online WS with reallocation. Namely, \preemptive, \lazy, and \constant. In brief, the first is a repeated (upon each client arrival or departure) application of the offline protocol in~\cite{bar2003windows}. Aiming to minimize channel usage, clients are preemptively reallocated to guarantee the same offline packing. Instead, in \lazy, clients are not reallocated as long as a maximum number of channels in use is not exceeded.
The idea is to save reallocations taking advantage of all possible channels. Finally, \constant~is designed to guarantee an amortized constant number of reallocations. The main approach is to classify clients by laxity. 

We evaluate experimentally all three protocols. Our simulations show that, in practice, all three achieve constant amortized reallocations with close to optimal channel usage. Our simulations also expose interesting trade-offs between reallocations and channel usage. 
On the theoretical side, we introduce a new objective function for WS with reallocations, that can be also applied to models where reallocations are not possible. 
We analyze this metric for \constant which, to the best of our knowledge, is the first online WS protocol for dynamic scenarios (clients may leave) with theoretical guarantees (against current load). 
Using previous results, we also observe bounds on channel usage for \preemptive. 

The rest of the paper is organized as follows. First we overview the related work in Section~\ref{relwork}. Section~\ref{model} includes formal definitions of the problem and the model. The details of our contributions are presented in Section~\ref{results}. We present and analyze reallocation protocols in Sections~\ref{preemplazy} and~\ref{constant}, and simulations in Section~\ref{simulations}. 

\section{Related Work}
\label{relwork}

WS is a problem with applications to various areas such as communication networks, supply chain, job scheduling, media on demand systems, etc. We overview here the most related work. 

In~\cite{bar2003windows} the authors studied WS for broadcast systems where pages have to be allocated to slotted broadcast channels.
The allocation comprises a schedule of pages to slots such that the gap between two consecutive slots reserved for page $i$ is at most its window $w_i$.
The model is static in the sense that, once pages arrive to the system, they do not leave. 
Two optimization criteria are studied: to minimize the number of channels used by a fixed set of clients and to maximize the number of clients allocated to a fixed set of channels.
For the former problem they show an offline approximation of $H+O(\ln H)$ where $H=\sum_i 1/w_i$. 
The approach followed is to round down each window size to the largest number of the form $x2^y$, where $x$ is an odd positive integer and $y$ is a non-negative integer. Then, allocate all clients of the same window size to the same channel creating a new one whenever necessary.
For online algorithms, an approximation of $H+O(\sqrt{H})$ channels was shown later in~\cite{bar2007windows}.
The approach is similar but the analysis takes into account that, being online, pages are allocated one by one and decisions are final.

A scenario where pages may leave the system was termed WS with Temporary Items in~\cite{chan2005temporary}.
Bounds in the latter work were proved on the competitive ratio against peak load.
That is, the online and the offline channel-usage maxima compared may occur at different times. That is, taking advantage of (rather than overcoming) windows departures.
Specifically, the authors showed that any online algorithm has a competitive ratio at least $1+1/w_{min}-\epsilon$, for any $\epsilon>0$, independently of knowledge of departure time. On the positive results side, they show an online algorithm that is $2+2/(\lfloor\lfloor w_{min}\rfloor \rfloor-1)$-competitive, for $w_{min}\geq 2$. 
Recently~\cite{yu2013online}, this upper bound was improved to $4w_{min}/(3w_{min}-4)$, for $w_{min}\geq 4$, in the context of supply chain inventory replenishment. 
The algorithm makes use of a combined first fit policy for assignment.

In~\cite{bar2007windows}, WS was studied as a \emph{restricted} version of UFBP. 
The reason is apparent, the combination of windows in WS is not additive as in UFBP.
Hence, UFBP upper bounds do not apply to WS but lower bounds do.
A lower bound of $H+\Omega(\ln H)$ on the online approximation of UFBP, which then applies also to WS, was shown in~\cite{bar2007windows}.
A dynamic version of UFBP where items leave was studied in~\cite{chan2008dynamic}. The authors show an upper bound of $3$ on the competitive ratio of best fit and worst fit allocation policies, and an upper bound of $2.4942$ for first fit. 
UFBP with reallocation was studied in~\cite{balogh2012line,balogh2012semi} as \emph{semi-online} algorithms with \emph{repacking}. 
In their model items do not leave the system, hence only lower bounds apply to WS with reallocation, but only upper bounds are presented. 
In~\cite{ivkovic1998fully}, \emph{fully dynamic} UFBP algorithms apply to settings where items may leave, but the reallocation cost is only bounded in terms of computation time.

\section{Model}
\label{model}

In this Section, we describe the Windows Scheduling problem and introduce 
the notation used throughout the paper.

We assume that time is discrete and it is determined by a global clock that runs in \defn{time steps}. Consider a set of clients $\CL$ that require periodical radio transmissions. Each client $\client_i \in \CL$ is determined by three parameters: \defn{arrival} time $\arrival_i$,  \defn{departure} time $\departure_i$, and \defn{laxity} 
$\window_i$. We say that client $\client_i$ is \defn{active} during the time interval $[\arrival_i,\departure_i]$. 
We assume that there is an infinite number of available radio channels $\CH = \{\channel_j\}_{j\in \mathbb{N}}$. 
During one time step, only one client is able to transmit via a single channel.

A \defn{schedule} for a client $\client_i \in \CL$ is a set $\schedule(\client_i) \triangleq\{(\channel^i_k,t^i_k)\}_{k \in [1,\ell_i]}$,
where, channel $\channel^i_k$ is the reserved channel for $\client_i$'s transmission at time step $t^i_k$.
Time steps $t^i_k$ should satisfy the following conditions:
\begin{eqnarray}
t^i_1 - \arrival_i &\leq& \window_i,\label{l-arrival} \\ 
\departure_i-t^i_{\ell_i} &\leq&\window_i\mbox{ and} \label{l-departure}\\
t^i_{k+1}-t^i_{k} &\leq& \window_i \mbox{ for all } 1 \leq k < \ell_i. \label{l-laxity}
\end{eqnarray}
Parameter $\ell_i$ denotes the number of transmissions performed by client $\client_i$. Note that this parameter is independent for each client, 
it is not set before hand and has to be set by the schedule so that conditions (\ref{l-arrival}), (\ref{l-departure}) and (\ref{l-laxity}) are satisfied.
Thus, conditions (\ref{l-arrival}),  
(\ref{l-departure}) and (\ref{l-laxity})
ensure that each client $\client_i$ transmits only while it is active and that periodical transmissions are so that the laxity condition $\window_i$ for client $\client_i$ is satisfied. 
Since client $\client_i$ has to transmit periodically every (at most) $\window_i$ time steps, we say that the \defn{load of client} $\client_i$ is $1/\window_i$. 

We say that schedule $\schedule(\client_i)$ \defn{reallocates} client $\client_i$ each time that $\channel^i_k \neq \channel^i_{k+1}$ for $1\leq k < \ell_i$. We denote by $reall(\schedule(\client_i))$ the number of times schedule $\schedule(\client_i)$ reallocates client $\client_i$. The total \defn{number of reallocations} of schedule $\schedule(\CL)$ for the set of clients $\CL$ is denoted by $reall(\schedule(\CL))$ and is defined as follows:
$$
reall(\schedule(\CL)) \triangleq \sum_{\client_i \in \CL} reall(\schedule (\client_i)).
$$

We are interested in the \defn{amortized} number of reallocations, that is, the number of reallocations of a schedule divided by the number of arrivals and departures in the system. For that purpose, let us define a \defn{round} as the set of time slots between events (arrival or departure of a client). In other words, at each round occurs exactly one event, either an arrival or a departure of a client. Hence, if $reall_r(\schedule(\CL))$ denotes the number of reallocations up to round $r$ produced by schedule $\schedule(\CL)$, the \defn{amortized} number of reallocations of schedule $\schedule(\CL)$ in the set of clients $\CL$ at round $r$ is
$ reall_r(\schedule(\CL))/r$.

Naturally, there exist a trade-off between the number of reallocations and the number of channels used by a schedule, the more reallocations the less number of channels used at each round. Indeed, if a schedule can reallocate clients freely, it can achieve the  optimal offline number of channels used at each round by simply computing the optimal offline schedule at each round, and reorganizing the schedule via reallocations in order to mimic the optimal offline schedule. Thus, we are interested in understanding this trade-off at each round. For any round $r$, let $\CL(r)\subseteq\CL$ be the set of active clients in the system at round $r$. 
Hence, we define $H_r \triangleq \sum_{\client_i \in \CL(r)} 1/\window_i$ as the \defn{load of the system} at round $r$.  The value $\lceil H_r \rceil$ is a lower bound on the optimal number of used channels at round $r$. Hence, $\CH(\schedule(\CL))_r / \lceil H_r \rceil$  is an upper bound on the competitive ratio of the number of channels used at round $r$, where $\CH(\schedule(\CL))_r $ denotes the number of channels used at round $r$ by schedule $\schedule(\CL)$ . 

Using this notation, we define the online WS problem with reallocations as follows: 
\begin{definition} 
\label{def:metric}
Let $\CL$ be a set of clients revealed online, that is, arrivals and departures of clients are revealed one by one. 
The \defn{Online Windows Scheduling with Reallocations} problem is to determine a schedule $\schedule(\CL)$, possibly with reallocations, 
so that the following sum is minimum.
$$
 \frac{reall_r(\schedule(\CL))}{r} + \frac{\CH(\schedule(\CL))_r}{\lceil H_r \rceil}
$$
\end{definition}

Notice that this metric is against current load and applies to any given round $r$. Should we instead be interested in the maximum competitive ratio, we could simply plug the maximum up to round $r$ instead of the current. Likewise, should the reallocation cost be any known value $c$, it could be simply introduced in the expression above multiplying the first term.

\section{Our Contributions}
\label{results}

The main contributions of this paper follow. It should be noticed that our upper bounds cannot be compared with previous theoretical work on WS because either their models assume that clients do not leave~\cite{bar2003windows,bar2007windows}, or the bounds on channel usage are proved against peak load~\cite{chan2005temporary,yu2013online}, or they apply to the less restrictive UFBP problem~\cite{bar2007windows,chan2008dynamic,balogh2012line,balogh2012semi,ivkovic1998fully}.

\begin{itemize}
\item The presentation of three protocols for Online Windows Scheduling with Reallocation called \preemptive, \lazy, and \constant. \preemptive guarantees low channel-usage because clients are preemptively reallocated to achieve the offline packing of~\cite{bar2003windows}. Attempting to save reallocations while keeping the channel usage bounded, in \lazy clients are not reallocated as long as a maximum number of channels in use is not exceeded.
In \constant, the main approach is to classify clients by laxity, except for ``large'' laxities that are allocated in one special channel to maintain the channel-usage overhead below an additive logarithmic factor. To the best of our knowledge, \constant is the first online WS protocol for dynamic scenarios (clients may leave) with theoretical guarantees (against current load). 

\item The experimental evaluation of all three protocols showing that, in practice, all of them achieve constant amortized reallocations with close to optimal channel usage. Our simulations also expose interesting trade-offs between reallocations and channel usage.

\item The introduction of a new objective function for Online Windows Scheduling with Reallocations, that can be also applied to models where reallocations are not possible. This metric combines linearly the effect of reallocations amortized over rounds with the number of channels used in contrast with the optimal UFBP, which is a lower bound on the WS optimal allocation.

\item Using previous results~\cite{bar2003windows}, for \preemptive we observe an upper bound of $\lceil 2 H_r \rceil$ on channel usage for every round $r$. Given that in \lazy the maximum number of channels is a parameter, its channel usage depends on the implementation. 

\item We prove that, for any round $r$, the number of reallocations required by \constant is at most $3r/2$, and the number of channels used is at most
$OPT(r)+1+\log \left(\min \left\{ w_{\max}(r),\lceil\lceil n(r) \rceil\rceil \right\} / w_{\min}(r) \right)$~\footnote{Throughout, $\log$ means $\log_2$ unless otherwise stated. We define $\lceil\lceil n(r) \rceil \rceil$ as the smallest power of $2$ that is not smaller than $n(r)$.}
, where $OPT(r)$ is the optimal number of channels required, $w_{\max}(r)$ and $w_{\min}(r)$ are respectively the maximum and minimum laxities in the system, and $n(r)$ is the number of clients in the system, all for round $r$. We apply these bounds to our objective function in Definition~\ref{def:metric}.
\end{itemize}


\section{\preemptive and \lazy Algorithms}
\label{preemplazy}
The first two algorithms use the concept of a \emph{broadcast tree} to represent the schedule corresponding to each channel. 

\parhead{Broadcast Tree to Represent a Schedule}
Similarly to the the work by Bar-Noy et al. \cite{bar2003windows} and Chan et al. \cite{chan2005temporary}, we represent a schedule for a channel 
with a tree.
In particular, we use a binary tree where all nodes have exactly zero or two children.  Each leaf of the binary tree has assigned one client (different for each leaf) or the leaf is empty. Each complete binary tree with such an assignment represents uniquely one schedule for a channel. Given a complete binary tree, the schedule picks one leaf at each time step so that either the assigned client transmits or no transmission is produced if the leaf is empty. In order to pick one leaf, walk down from the root up to one leaf following the next recursive rules: the first time a bifurcation is visited go to the left child. The $i$-th time a bifurcation is visited go to the child that was not visited in the previous visit.  Such trees are called \defn{broadcast trees}. 

A used channel will be fully described by its broadcast tree in the following two algorithms. 
Thus, the following two algorithms are determined by \textit{(i)} the procedure to assign clients to a leaf in a broadcast tree when a client arrives and 
\textit{(ii)} the procedure to reallocate clients if required when a client leaves the system.

\parhead{Greedy Construction of a Broadcast Tree}
Consider an empty channel as a single node, the root of its corresponding tree. Define every root as \defn{available}. 
For the first client $\client_1$ that arrives, let $v_1$ be the nonnegative integer such that $2^{v_1} \leq \window_1 < 2^{v_1 +1}$. 
Append to any root a binary tree of height $v_1$ in which for each depth from $1$ to $v_1-1$, there is a single leaf and for depth $v_1$ there are two leaves.
Set every new leaf as available except for one leaf at depth $v$ which is assigned to $\client_1$. 
For the $i$-th client $\client_i$ that arrives, let $v_i$ be the nonnegative integer such that $2^{v_i} \leq \window_i < 2^{v_i +1}$.
If there is an available leaf at depth $v_i$, assign $\client_i$ to that leaf. Otherwise, let $0<u<v_i$ be any value such that there is an open leaf at depth $u$ in some used broadcast tree. 
Append a binary tree of height $v_i - u$ to that leaf in which for each depth from $u$ to $v_1- u -1$, there is a single leaf and for depth $v_1$ there are two leaves.
Set every new leaf as available except for one leaf at depth $v_i$ which is assigned to $\client_i$.
If  no such $u$ exists, append the described tree to an available root. In this case consider $u=0$. 

\parhead{\preemptive}
The \defn{\preemptive} procedure maintains the following invariant: for each depth there is at most one broadcast tree with a leaf available. 
If the invariant is violated after some departure, it means that there are two broadcast trees with an available leaf in the same depth. 
Then, the branches that hang from the twin node of the available leaf can be hanged from the same tree to reinstate the invariant. 
The \preemptive procedure does so minimizing the total number of reallocations, that is, moving the branch that hangs from the broadcast tree with less clients assigned.

\parhead{\lazy}
The \defn{\lazy} procedure reallocates only when the fraction $ \CH(\schedule(\CL))_r/\lceil H_r \rceil$ exceeds a threshold $T$ after one departure.
It exhaustively reallocates clients until no more reallocations can be made. Reallocations are done according to the invariant used by the \preemptive procedure merging the smallest depth with two available leaves at the same depth.

The \defn{\preemptive} and \defn{\lazy} algorithms use the greedy construction of broadcast trees when clients arrive and preemptive and lazy reallocations procedures when clients leave the system, respectively. 
Lemma $5$ of Bar-Noy et al. \cite{bar2003windows} implies that the \preemptive algorithm guarantees $ \CH(\schedule(\CL))_r < \lceil 2 H_r \rceil$ for every round $r$. Hence, $ \CH(\schedule(\CL))_r/\lceil H_r \rceil < 2$ for every round $r$ when $\schedule(\CL)$ is constructed according to the \preemptive algorithm. On the other hand, by definition of the \lazy procedure, the \lazy algorithm guarantees that  $ \CH(\schedule(\CL))_r/\lceil H_r \rceil \leq T$ when  $\schedule(\CL)$ is constructed according to the \lazy algorithm. In Section \ref{simulations}, we study via experiments the behavior of $ reall_r(\schedule(\CL))/r$ for the \preemptive and \lazy algorithms. In the particular implementation of the \lazy algorithm included in this work, we set the threshold $T$ equal to $4\sqrt{H_r}$. Hence, in that case it holds $\CH(\schedule(\CL))_r/\lceil H_r \rceil \leq 4\sqrt{H_r}$.  For both algorithms, \preemptive and \lazy, we show experimentally that $ reall_r(\schedule(\CL))/r \leq 1$.  

\section{\constant}
\label{constant}

In this section, we present a reallocation algorithm that guarantees $O(1)$ reallocations amortized on rounds.
The details of the protocol can be found in Algorithm~\ref{alg:O(1)realloc}. 
Bounds on channel usage and reallocations are proved in Theorem~\ref{thm:O(1)realloc}. 
Corollary~\ref{corollary} establishes these bounds in our objective function. 
For convenience, for any number $x$, we define the \defn{hyperceiling} of $x$, denoted as $\lceil\lceil x \rceil \rceil$, to be the smallest power of $2$ that is not smaller than $x$.
For this algorithm, we restrict the input to laxities that are powers of $2$. The study of inputs with arbitrary laxities is left for future work.

The intuition of the protocol is the following. 
When a new client $\client_i$ arrives and there are already $n-1$ clients in the system, for $n\geq1$, we distinguish two cases.
If the laxity of $\client_i$ is at least $2\lceil\lceil n \rceil\rceil$, assign $\client_i$ to a special channel called \defn{big channel}. All clients allocated to the big channel transmit with period $\lceil\lceil n \rceil\rceil$ so, because there are $n$ clients in the system, one big channel is enough. All the other channels being used are called \defn{small channels}.
Otherwise, if the laxity is $w_i<2\lceil\lceil n \rceil\rceil$, assign client $\client_i$ to a channel reserved for laxities $w_i$, we call it \defn{$w_i$-channel}. If such channel does not exist or all $w_i$-channels are full, reserve a new one.
For any laxity $w_i$, all clients allocated to a $w_i$-channel transmit with period $w_i$. That is, a maximum of $w_i$ clients can be allocated to a $w_i$-channel.
When a client $\client_j$ of laxity $w_j$ leaves a channel $C$, if $C$ is the big channel do nothing.
Otherwise, reallocate a client from the $w_j$-channel of minimum load (if any other) to the slot left by $\client_j$.

With each arrival or departure the number of clients $n$ change.
If, upon an arrival, $\lceil\lceil n\rceil\rceil$ becomes larger than the laxity of some clients allocated to a big channel, reallocate those clients to other channels according to laxity, reserving new channels if necessary.
Because $n$ was doubled since the allocation of these clients, these reallocations are amortized by the arrivals that doubled $n$.
If, upon a departure, $2\lceil\lceil n\rceil\rceil$ becomes smaller than the laxity of some clients, reallocate those clients to a big channel, releasing the reservation of the channels that become empty.
Because $n$ was halved since the allocation of these clients, these reallocations are amortized by the departures that halved $n$.


\begin{algorithm}[tbp]
\label{alg:O(1)realloc}
\caption{$O(1)$ reallocations. $\lceil\lceil n\rceil\rceil$ is the largest power of $2$ that is not greater than $n$. For any laxity $w$, all clients allocated to a $w$-channel are scheduled to transmit with period $w$.}
\small
\SetKwFor{Upon}{upon}{do}{endupon}
\SetKwFor{Task}{Task}{}{endtask}
\DontPrintSemicolon
   $n\leftarrow 0$\label{init}\tcp*[f]{active clients count}\;
   $\tau\leftarrow 2$\label{init}\tcp*[f]{big channel threshold}\;
   \textbf{start} tasks $1$ and $2$\;
\Task{1}{
\Upon{arrival of client $\client_i$}{
   $n\leftarrow n+1$\label{arrival}\;
   \If(\tcp*[f]{consolidate the big channel}){$2\lceil\lceil n \rceil\rceil > \tau $}{ \label{line:bigoutbegin}
      $\tau\leftarrow2\lceil\lceil n \rceil\rceil$\;
         \ForEach{client $\client_j$ in the big channel such that $w_j<\tau/2$}{\label{line:bigoutmid}
            \lIf{all $w_j$-channels are full}{reserve a new $w_j$-channel}
            reallocate $\client_j$ to the $w_j$-channel of minimum load\;\label{line:bigoutend}\;
         }
         \ForEach{client $\client_j$ in the big channel}{
            re-schedule $\client_j$ to transmit with period $\tau/2$\;
         }
   }
   \If(\tcp*[f]{allocate new client}){$w_i\geq \tau$}{ \label{line:biguponarrival}
      allocate $\client_i$ to the big channel to transmit with period $\tau/2$\;
   }\Else{\label{line:smalluponarrival}
      \lIf{all $w_i$-channels are full}{reserve a new $w_i$-channel}
      allocate $\client_i$ to the $w_i$-channel of minimum load\; 
   }
}
}
\Task{2}{
\Upon{departure of client $\client_i$ from channel $c$}{
   $n\leftarrow n-1$\label{departure}\;
   \If(\tcp*[f]{consolidate $w_i$-channels}){$c$ is not the big channel}{
      \lIf{$c$ is empty}{release $c$}
      \ElseIf{there is a $w_i$-channel $c'\neq c$ that is not full}{ \label{line:simplebegin}
            reallocate a client from $c'$ to $c$\;\label{line:simpleend}
      }
   }
   \If(\tcp*[f]{consolidate the big channel}){$2\lceil\lceil n \rceil\rceil < \tau$}{ 
      $\tau\leftarrow2\lceil\lceil n \rceil\rceil$\;
         \ForEach{client $\client_j$ in the big channel}{
            re-schedule $\client_j$ to transmit with period $\tau/2$\;
         }
      \ForEach{$w$-channel $c''$ such that $w>2\tau$}{\label{line:biguponconsolidation}
            \ForEach{client $k$ in $c''$}{
               reallocate $k$ to the big channel to transmit with period $\tau/2$\;\label{line:biguponconsolidationmid}
            }
            release $c''$\;\label{line:biguponconsolidationend}
      }
   }
}
}
\end{algorithm}


The following theorem bounds the number of reallocations and the number of channels used by the \constant algorithm.

\begin{theorem}
\label{thm:O(1)realloc}
Given a set of clients $\CL$, the schedule $S(\CL)$ obtained by the \constant algorithm requires at most $3r/2$ reallocations up to round $r$. Additionally, for any round $r$ such that $\CL(r)\neq\emptyset$, the number of reserved channels is at most 
\begin{align*}
OPT(r)+1+\log \frac{\min\left\{\max_{i\in\CL(r)}\{w_i,\lceil\lceil|\CL(r)|\rceil\rceil\}\right\}}{\min_{i\in\CL(r)}w_i},
\end{align*}
where $OPT(r)$ is the minimum number of channels required to allocate the clients in $\CL(r)$.
\end{theorem}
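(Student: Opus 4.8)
The plan is to bound the two quantities separately, keeping track of the potential-style amortization argument already sketched in the algorithm description. For the reallocation bound, I would classify every reallocation into one of three types and charge each to a round (an arrival or a departure). First are the \emph{consolidation of $w$-channels} reallocations in Task~2 (lines~\ref{line:simplebegin}--\ref{line:simpleend}): at most one such reallocation is triggered per departure, so these contribute at most $r/2$ over $r$ rounds if we are generous, but more precisely at most one per departure. Second are the \emph{big-channel evictions} in Task~1 (lines~\ref{line:bigoutbegin}--\ref{line:bigoutend}), which happen only when $\lceil\lceil n\rceil\rceil$ doubles; since a client placed in the big channel had laxity at least $\tau = 2\lceil\lceil n\rceil\rceil$ at insertion time and is evicted only once $\tau$ has at least doubled, each such client's eviction can be charged to the at least $\lceil\lceil n\rceil\rceil/2$ arrivals that caused the doubling — and since the big channel holds at most $\lceil\lceil n\rceil\rceil$ clients, the amortized cost is $O(1)$ per arrival. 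Third are the \emph{big-channel re-absorptions} in Task~2 (lines~\ref{line:biguponconsolidation}--\ref{line:biguponconsolidationend}), charged symmetrically to the departures that halved $n$. Summing, each round pays $O(1)$; the constant $3/2$ comes from a careful accounting (e.g., one charge for the direct $w$-channel consolidation plus at most a half-charge amortized for the big-channel movements — I would set up the bookkeeping so the total is exactly $3r/2$, likely by showing each departure costs at most $3/2$ and each arrival at most $3/2$, or a mix that averages to $3/2$).

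For the channel-usage bound I would argue about the state at a fixed round $r$ directly, without reference to history. Let $W = \{w_i : i \in \CL(r)\}$ be the set of distinct laxities actually present, all powers of $2$ by hypothesis. The channels in use at round $r$ are: the (at most one) big channel, plus, for each laxity $w \in W$ with $w < \tau$, some number of $w$-channels. The key structural fact I need is that, for each laxity $w$, at most one $w$-channel is non-full — this is the invariant maintained by the consolidation step in Task~2, so the number of $w$-channels is $\lceil (\text{clients of laxity } w)/w \rceil$. Hence the number of small channels is $\sum_{w \in W'} \lceil c_w / w \rceil$ where $c_w$ is the count of clients of laxity $w$ currently routed to $w$-channels and $W' \subseteq W$ is the set of laxities below threshold. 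I would bound $\sum_{w} \lceil c_w/w\rceil \le \sum_w c_w/w + |W'| \le H_r + |W'|$, and then bound $|W'|$: the laxities in $W'$ are distinct powers of $2$ lying in the interval $[w_{\min}(r), \tau)$, and moreover only those below roughly $2\lceil\lceil n(r)\rceil\rceil$ (since larger-laxity clients are in the big channel), so $|W'| \le 1 + \log(\min\{w_{\max}(r), 2\lceil\lceil n(r)\rceil\rceil\}/w_{\min}(r))$. Combining with $OPT(r) \ge \lceil H_r\rceil \ge H_r$ and adding $1$ for the big channel gives the claimed bound, modulo reconciling $2\lceil\lceil n(r)\rceil\rceil$ with the stated $\lceil\lceil n(r)\rceil\rceil$ inside the $\min$ — I would check whether the threshold is $\tau = 2\lceil\lceil n\rceil\rceil$ and the period is $\tau/2 = \lceil\lceil n\rceil\rceil$, so that a client in a small channel has $w < 2\lceil\lceil n\rceil\rceil$, meaning it has $w \le \lceil\lceil n\rceil\rceil$ (next power of $2$ down), which is exactly what tightens the $\min$ to $\lceil\lceil n(r)\rceil\rceil$.

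The main obstacle I anticipate is the amortization bookkeeping for the big-channel movements: one must be careful that a single client is not charged twice, i.e., that between an eviction-from-big and a later re-absorption-into-big, enough arrivals/departures have occurred to pay for both, and that the $\tau$ variable's hysteresis (doubling on the way up at threshold $2\lceil\lceil n\rceil\rceil > \tau$, halving on the way down at $2\lceil\lceil n\rceil\rceil < \tau$) actually prevents rapid oscillation. I would handle this with a potential function, something like $\Phi = (\text{number of clients currently in the big channel})$ or a weighted version counting how "close" each big-channel client is to being evicted, and show the amortized cost (real reallocations plus $\Delta\Phi$) is at most $3/2$ per round. The secondary subtlety is verifying that the Task~1 re-scheduling of big-channel clients to period $\tau/2$ (and the analogous step in Task~2) does not itself count as a reallocation — it changes the transmission period but not the channel, so by the definition $reall(\schedule(\client_i))$ counts only channel changes, it is free; I would state this explicitly.
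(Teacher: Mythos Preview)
Your plan is essentially the paper's proof. You correctly identify the three reallocation events, charge type~1 to the triggering departure, charge type~2 (evictions from the big channel) to the arrivals that doubled $n$, and charge type~3 (absorptions into the big channel) to the departures that halved $n$; the paper does exactly this via three short lemmas that formalize the ``$n$ doubled / $n$ halved since last allocation'' claims you state. Your channel-count argument---one big channel, at most one non-full $w$-channel per laxity, full channels bounded via the load---is also the paper's argument, and your resolution of the $2\lceil\lceil n\rceil\rceil$ versus $\lceil\lceil n\rceil\rceil$ issue (a client in a small channel has $w<2\lceil\lceil n\rceil\rceil$, hence $w\le\lceil\lceil n\rceil\rceil$ since $w$ is a power of~2) is precisely what is needed.

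The one place where your sketch is vaguer than the paper is the final arithmetic for the constant $3/2$. The paper does not use a potential function; it simply observes that the charging yields at most one reallocation per arrival (from event~2) and at most two per departure (one from event~1 plus one from event~3). Writing $a$ for arrivals and $d$ for departures, $a+d=r$ and $d\le a$ force $d\le r/2$, so the total is at most $a+2d=r+d\le 3r/2$. Your ``each round pays $3/2$'' or ``a mix that averages to $3/2$'' is the right target, but the clean route is this asymmetry (arrivals pay~1, departures pay~2) together with $d\le r/2$, not a symmetric $3/2$ per round. A potential function would work too, but it is more machinery than needed here.
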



\begin{proof}
The bound on the number of channels follows from the algorithm. Specifically, for any round $r$, there are at most $OPT(r)$ $w_i$-channels full, and there is at most one big channel. With respect to the not-full $w_i$-channels, the maximum $w_i$ is either $\max_{i\in\CL(r)}w_i$ or $\lceil\lceil|\CL(r)|\rceil\rceil$, whatever is smaller. Given that all laxities are powers of $2$ the bound follows.

To bound the reallocations, we map reallocations to arrivals or departures. Given that rounds are defined by arrivals and departures, the amortized bound follows. The mapping is the following. Clients are reallocated due to one of three possible events as follows. 

\begin{enumerate}
\item\label{event1}
Some client $\client_j$ with laxity $w_j$ departed from a $w_j$-channel $c$ that was full. Then, if there is some other $w_j$-channel $c'$ that is not full, some client $\client_i$ allocated to $c'$ is reallocated to the slot left by $\client_j$ in $c$ (see Lines~\ref{line:simplebegin}-\ref{line:simpleend} in Algorithm~\ref{alg:O(1)realloc}). 
\item\label{event2}
Upon the arrival of some client, the total number of clients $n$ increases making $2\lceil\lceil n \rceil\rceil$ larger than the big-channel threshold. Then, any client $\client_i$ allocated to the big channel whose laxity is $w_i<\lceil\lceil n \rceil\rceil$ is reallocated to a $w_i$-channel (see Lines~\ref{line:bigoutbegin}-\ref{line:bigoutend} in Algorithm~\ref{alg:O(1)realloc}). 
\item\label{event3}
Upon the departure of some client, the total number of clients $n$ decreases making $2\lceil\lceil n \rceil\rceil$ smaller than the big-channel threshold. Then, all clients in all $w$-channels such that $w>4\lceil\lceil n \rceil\rceil$ are reallocated to the big channel (see Lines~\ref{line:biguponconsolidation}-\ref{line:biguponconsolidationend} in Algorithm~\ref{alg:O(1)realloc}). 
\end{enumerate}
No other event triggers a reallocation. 

Now, we define the mapping. For Event~\ref{event1}, the reallocation of $\client_i$ is mapped to the departure of $\client_j$. To define the mapping for Event~\ref{event2}, we need the following lemmas.

\begin{lemma}
\label{claim:doubled}
Consider a client $\client_i$ that has to be reallocated from the big channel in Lines~\ref{line:bigoutmid}-\ref{line:bigoutend} of Algorithm~\ref{alg:O(1)realloc}. 
Let $n$ be the number of clients in the system at the time of reallocation in Line~\ref{line:bigoutend}, and
$n'$ be the number of clients in the system at the time of the last allocation of $\client_i$.
Then, it is $n\geq 2n'$.
That is, after the last allocation of $\client_i$ to the big channel, the total number of clients in the system at least has doubled.
\end{lemma}
\begin{proof}
In the following, all line numbers refer to Algorithm~\ref{alg:O(1)realloc}.
Clients are reallocated from the big channel because their laxities are strictly smaller than $\tau/2= \lceil\lceil n \rceil\rceil$ (see Line~\ref{line:bigoutmid}).
In order to be allocated to the big channel, $\client_i$ must have a laxity at least $2\lceil\lceil n' \rceil\rceil$ if it was upon arrival (see Line~\ref{line:biguponarrival}), or at least $4\lceil\lceil n' \rceil\rceil$ if it was upon consolidation of the big channel (see Line~\ref{line:biguponconsolidation}). In either case, it must be $2\lceil\lceil n' \rceil\rceil\leq w_i < \lceil\lceil n \rceil\rceil$, which implies that $n\geq 2n'$.
\qed\end{proof}

\begin{lemma}
\label{claim:bigout}
In any given round, at most half of the clients in the system are reallocated from the big channel after executing Lines~\ref{line:bigoutmid}-\ref{line:bigoutend} of Algorithm~\ref{alg:O(1)realloc}.
\end{lemma}
\begin{proof}
Out of the set of clients being reallocated from the big channel at a given round $r$, consider the client $\client_i$ that was allocated last, say, in some round $r'<r$.
From Lemma~\ref{claim:doubled}, we know that between $r'$ and $r$ the number of clients in the system has at least doubled. Furthermore, we know that none of the clients that arrived after $r'$ has to be reallocated in round $r$, because $\client_i$ was the last one. Hence, it cannot be that more than half of the clients in the system are reallocated from the big channel in round $r$.
\qed\end{proof}

Now we define the mapping for Event~\ref{event2}. Let $n$ be the number of clients in the system at the time of Event~\ref{event2}. As shown in Lemma~\ref{claim:bigout}, \emph{at most} $n/2$ clients have to be reallocated. And, as shown in Lemma~\ref{claim:doubled}, after the last allocation to the big channel of any client that has to be reallocated, the total number of clients in the system \emph{at least} has doubled. Hence, the at most $n/2$ reallocations are mapped to the at least $n/2$ arrivals.

To define the mapping for Event~\ref{event3}, we need the following lemma.

\begin{lemma}
\label{claim:halved}
Consider a client $\client_i$ that has to be reallocated to the big channel in Lines~\ref{line:biguponconsolidation}-\ref{line:biguponconsolidationend} of Algorithm~\ref{alg:O(1)realloc}. 
Let $n$ be the number of clients in the system at the time of reallocation in Line~\ref{line:biguponconsolidationmid}, and
$n'$ be the number of clients in the system at the time of the last allocation of $\client_i$.
Then, it is $n\leq n'/2$.
That is, after the last allocation of $\client_i$, the total number of clients in the system at least has halved.
\end{lemma}
\begin{proof}
In the following, all line numbers refer to Algorithm~\ref{alg:O(1)realloc}.
Clients are reallocated to the big channel because their laxities are strictly larger than $2\tau = 4\lceil\lceil n \rceil\rceil$ (see Line~\ref{line:biguponconsolidation}).
Because $\client_i$ was not in the big channel, $\client_i$ must have a laxity strictly smaller than $2\lceil\lceil n' \rceil\rceil$ if the last allocation was upon arrival (see Line~\ref{line:smalluponarrival}), or strictly smaller than $\lceil\lceil n' \rceil\rceil$ if it was upon consolidation of the big channel (see Line~\ref{line:bigoutmid}). 
In either case, it must be $4\lceil\lceil n \rceil\rceil< w_i < \lceil\lceil n' \rceil\rceil$, which implies that $n\leq n'/4\leq n'/2$.
\qed\end{proof}

Finally, the mapping for Event~\ref{event3} is the following. Let $n$ be the number of clients in the system at the time of Event~\ref{event3}. As shown in Lemma~\ref{claim:halved}, after the last allocation of any client that has to be reallocated, the total number of clients in the system at least has halved. Hence, the at most $n$ reallocations are mapped to the at least $n$ departures.

In the mapping above, there is at most one reallocation for each arrival and at most two reallocations for each departure. Given that there cannot be more departures than arrivals, the number of reallocations up to round $r$ are at most $r/2+r$. Hence, the claim follows.
\qed\end{proof}


The following corollary is a direct consequence of Theorem~\ref{thm:O(1)realloc} and the fact that $OPT(r)\geq\lceil\sum_{i\in \CL(r)} 1/w_i\rceil=\lceil H_r \rceil$.

\begin{corollary}
\label{corollary}
Given a set of clients $\CL$, the schedule $S(\CL)$ obtained by the \constant algorithm achieves, for any round $r$, is
$ reall_r(\schedule(\CL))/r + \CH(\schedule(\CL))_r/\lceil H_r \rceil \leq 
5/2 + (1+\log(\min\{\max_{i\in\CL(r)}w_i,\lceil\lceil |\CL(r)|\rceil\rceil\}/\min_{i\in\CL(r)}w_i))/\lceil H_r \rceil$.
\end{corollary}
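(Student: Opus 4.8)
The plan is to derive Corollary~\ref{corollary} directly from Theorem~\ref{thm:O(1)realloc} by substituting the two bounds of the theorem into the two summands of the objective function in Definition~\ref{def:metric}. For the reallocation term, Theorem~\ref{thm:O(1)realloc} gives $reall_r(\schedule(\CL))\leq 3r/2$, so dividing by $r$ yields $reall_r(\schedule(\CL))/r\leq 3/2$ for every round $r$. For the channel-usage term, Theorem~\ref{thm:O(1)realloc} gives $\CH(\schedule(\CL))_r\leq OPT(r)+1+\log\bigl(\min\{\max_{i\in\CL(r)}w_i,\lceil\lceil|\CL(r)|\rceil\rceil\}/\min_{i\in\CL(r)}w_i\bigr)$ for any round $r$ with $\CL(r)\neq\emptyset$.

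The one nontrivial step is replacing $OPT(r)$ by $\lceil H_r\rceil$ in a way that lets the first part of the channel bound cancel cleanly against the denominator. I would invoke the stated fact that $OPT(r)\geq\lceil\sum_{i\in\CL(r)}1/w_i\rceil=\lceil H_r\rceil$: this is the standard volume lower bound for WS/UFBP, since each client of laxity $w_i$ contributes load $1/w_i$ and a channel holds total load at most $1$, so at least $\lceil H_r\rceil$ channels are needed; it is already asserted in Section~\ref{model} and just above the corollary. Using this inequality, $\CH(\schedule(\CL))_r/\lceil H_r\rceil\leq OPT(r)/\lceil H_r\rceil + (1+\log(\cdots))/\lceil H_r\rceil$, and since $OPT(r)\geq\lceil H_r\rceil$ we cannot bound $OPT(r)/\lceil H_r\rceil$ by $1$ — it goes the wrong way. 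The correct move is the reverse: bound the additive overhead term $1+\log(\cdots)$ by itself over $\lceil H_r\rceil$, and bound $OPT(r)$ from above. But $OPT(r)$ is not bounded above by $\lceil H_r\rceil$ in general, so instead I would keep $OPT(r)$ in a form where the division by $\lceil H_r\rceil$ is harmless only if $OPT(r)=O(\lceil H_r\rceil)$. Re-reading the corollary statement, it writes the second summand with $\lceil H_r\rceil$ only in the denominator of the logarithmic overhead, and a bare constant $1$ for the leading term; so the intended reading is $\CH(\schedule(\CL))_r/\lceil H_r\rceil \le 1 + (1+\log(\cdots))/\lceil H_r\rceil$, which requires $OPT(r)\le \lceil H_r\rceil$, i.e.\ $OPT(r)=\lceil H_r\rceil$. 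This holds here because the input laxities are restricted to powers of $2$: clients of laxity $w$ pack perfectly $w$ per channel, so grouping by laxity and using $\lceil H_r\rceil$ channels in the idealized grouping is optimal — hence $OPT(r)=\lceil\sum 1/w_i\rceil = \lceil H_r\rceil$ is in fact an equality, not just a lower bound, and dividing the $OPT(r)$ term by $\lceil H_r\rceil$ gives exactly $1$.

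Putting the pieces together: for any round $r$ with $\CL(r)\neq\emptyset$,
\begin{align*}
\frac{reall_r(\schedule(\CL))}{r}+\frac{\CH(\schedule(\CL))_r}{\lceil H_r\rceil}
&\leq \frac{3}{2}+\frac{OPT(r)+1+\log\bigl(\min\{\max_{i\in\CL(r)}w_i,\lceil\lceil|\CL(r)|\rceil\rceil\}/\min_{i\in\CL(r)}w_i\bigr)}{\lceil H_r\rceil}\\
&= \frac{3}{2}+1+\frac{1+\log\bigl(\min\{\max_{i\in\CL(r)}w_i,\lceil\lceil|\CL(r)|\rceil\rceil\}/\min_{i\in\CL(r)}w_i\bigr)}{\lceil H_r\rceil}\\
&= \frac{5}{2}+\frac{1+\log\bigl(\min\{\max_{i\in\CL(r)}w_i,\lceil\lceil|\CL(r)|\rceil\rceil\}/\min_{i\in\CL(r)}w_i\bigr)}{\lceil H_r\rceil},
\end{align*}
which is exactly the claimed bound. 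The main obstacle, as indicated above, is the direction of the $OPT(r)$ versus $\lceil H_r\rceil$ comparison: one must observe that under the powers-of-$2$ restriction the volume lower bound is tight, so $OPT(r)=\lceil H_r\rceil$ and the leading fraction collapses to the constant $1$; without that observation the bare $1$ in the corollary would not be justified. The remaining steps are the routine substitution of Theorem~\ref{thm:O(1)realloc} and arithmetic $3/2+1=5/2$.
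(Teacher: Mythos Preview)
Your overall structure matches the paper's: the corollary is obtained by plugging the two bounds of Theorem~\ref{thm:O(1)realloc} into the objective function of Definition~\ref{def:metric}. The paper's own justification is the single sentence that it is ``a direct consequence of Theorem~\ref{thm:O(1)realloc} and the fact that $OPT(r)\geq\lceil H_r\rceil$.'' You correctly notice that this inequality, as stated, points the wrong way: from $\CH(\schedule(\CL))_r\leq OPT(r)+1+\log(\cdots)$ one needs $OPT(r)\leq\lceil H_r\rceil$ (hence equality) to get the bare $+1$ in the corollary. So on that point you are being more careful than the paper.

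The gap in your proposal is the argument you give for $OPT(r)=\lceil H_r\rceil$. ``Grouping by laxity and using $\lceil H_r\rceil$ channels'' is not a valid construction: with one client of laxity~$2$ and one of laxity~$4$, grouping by laxity uses two channels while $\lceil H_r\rceil=\lceil 3/4\rceil=1$. What actually makes $OPT(r)=\lceil H_r\rceil$ true under the powers-of-$2$ restriction is a first-fit-decreasing (equivalently, broadcast-tree) argument: process clients in nonincreasing load $1/w_i$; inductively every open channel's residual capacity is an integer multiple of the current item size $1/w_i$, so either some channel has residual $\geq 1/w_i$ or all open channels are exactly full. Thus a new channel is opened only when all previous ones are full, leaving at most one non-full channel and hence exactly $\lceil H_r\rceil$ channels. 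This packing is realizable as a WS schedule via the broadcast trees of Section~\ref{preemplazy}, so $OPT(r)\leq\lceil H_r\rceil$; combined with the standard lower bound you cite, $OPT(r)=\lceil H_r\rceil$. With this corrected justification, your derivation of the $5/2+(1+\log(\cdots))/\lceil H_r\rceil$ bound goes through as written.
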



%

\section{Simulations}
\label{simulations}

\parhead{Model}
The deployment of the proposed algorithms on a real environment will require involvement of a large number of active users and resources, which is very hard to coordinate and build, and would prevent repeatability of results. Thus, simulation appears to be the easiest way to analyze the different proposed online reallocation strategies. Based on the simulation results, we can later encourage or discourage the deployment on a real production environment. Next, we present the simulation model implemented for evaluating the performance of the previously proposed reallocation policies, named \emph{CommunicationChannelsSim} (CCSim). The simulated communication channels have been performed by means of \texttt{SimJava 2.0} \cite{SimJava:www}. \texttt{SimJava} is a discrete event, process oriented, simulation package. It is an API that augments Java with building blocks for defining and running simulations.

All the simulations are performed from the point of view of one user. This user submits \emph{clients} to the scheduler. The clients are loaded from an XML input file. For our experiments we have created three different input files each containing 4000 clients with different laxities. These scenarios represent common situations on which clients with different laxities arrive to the system and departure in an instant of time that depends directly on the laxity demanded by the client. The better the laxity, the sooner he will leave the system. In the same way, the scheduler performs one of the proposed reallocation strategies and acts accordingly when a client arrives or leaves the system. We have implemented three versions of CCSim that only differ in the reallocation policy implemented. As a result, we isolate the reallocation strategy as the only factor that can cause number of channels and number of reallocations variations between these three CCSim versions. 

As we mentioned before, we have created our own input files. We have defined the structure of the files using the XML language. Thus, each input file contains 4000 clients with the following characteristics:
\begin{dinglist}{93}
    \item \texttt{id}: each client has its own identifier to differentiate it from the rest.
    \item \texttt{t\_arrive}: arrival simulation time of the client in seconds, for example, 899 seconds. 
    \item \texttt{size}: laxity of the client as a real value, for example, 6.628461669685978. The laxity can be calculated using a Gaussian (normally) distributed double value or it can be chosen uniformly. Thus, we can generate input files following Normal, Uniform or mixed laxity distributions.
    \item \texttt{w\_size}: the laxity is later rounded down to the larger power of 2, for our example, its value would be 4. 
    \item \texttt{t\_leave}: simulation time instant in seconds at which the client leaves the system. This value is calculated based on the laxity. Thus, if the client has a laxity less than or equal to 30, \texttt{t\_leave} will be \texttt{t\_arrive} plus a random number generated uniformly between 500 and 1000. Conversely, if the client has a laxity greater than 30, \texttt{t\_leave} will be \texttt{t\_arrive}  plus a random number generated uniformly between 1000 and 1500. As a consequence, clients with a smaller laxity will remain less time in the system. For instance, the client of the example will leave the system at second 1737.0.
\end{dinglist} 

Explanations for the main CCSim participating entities and its simulation setup follow.

\textbf{CCSim} represents the complete simulation, and is responsible for the creation of the main simulated entities: Scheduler and User. When the simulation starts, CCSim creates 1 User and 1 Scheduler.

The \textbf{Scheduler} entity represents a generic scheduler implementing the corresponding reallocation algorithm. When a Client \emph{arrives}, the Scheduler allocates it in a Broadcast Tree Node. This operation may involve creating a new Tree. Also, when a Client leaves, the Scheduler removes it from the corresponding Tree Node. This operation may imply reallocating Clients and deleting Trees.

The \textbf{User} models a user that submits Clients to the Scheduler. The User is responsible of sending the only two system events: Client arrival and Client departure by using Client's \texttt{t\_arrive} and \texttt{t\_leave} times. 

The \textbf{Client} entity represents a generic Client submitted to the Scheduler. This entity provides specific information about each client as defined in the XML input file. 

The Broadcast \textbf{Tree} entity, as described in Section \ref{preemplazy}, represents a channel with up to five different levels representing five Client laxities (2, 4, 8, 16, 32). Thus, at each level there are at most $2^{i}$ Nodes, with $i$ being a number between 1 and 5. Clients are allocated in their corresponding Node level according to their rounded down laxity \texttt{w\_size}. Each tree is characterized by a root Node.

The \textbf{Node} entity represents a Tree Node in which a Client can be allocated. Each node knows its level or laxity, if it is a root node, a left node or a right node. Also, each node points to its left subtree and to its right subtree.

\parhead{Discussion}
Our simulations show similar behaviors for normal and uniform input distributions.
All three protocols achieve constant amortized reallocations with close to optimal channel usage. 
Figures~\ref{fig1} and~\ref{fig2} illustrate the performance of the algorithms along rounds for a normally distributed input.
Figure~\ref{fig:HvsR} illustrates the trade-offs between channel usage and reallocations for uniform and normally distributed laxities.

We can observe in Figure~\ref{fig:load} the overall load of the system for a representative input. 
The load was increased over 1000 rounds until reaching a peak and decreased during the following 1000 rounds until reaching low load. Afterwards, the increase/decrease procedure was repeated but now maintaining the system loaded for approximately 4000 rounds.
We can see in Figure~\ref{fig:realloc} that \preemptive incurs in more reallocations than \lazy and \constant, but still for all three the amortized reallocations are below $1$. 
When the system is loaded, we observe in Figure~\ref{fig:ratio} that the competitive ratio stays around $1.5$ for all algorithms. 
When the system has low load, \preemptive still maintains the competitive ratio below $2$, whereas \lazy increases up to the parametric maximum channel usage. For these simulations, that maximum was $H_r+4\sqrt{H_r}$, that is, the online maximum channel usage of~\cite{bar2007windows} for clients that do not leave. With respect to \constant with low load, the competitive ratio is higher because, for simplicity, the simulation was carried out without using a big channel (refer to the \constant algorithm). 
\preemptive and \lazy reflect clearly the trade-off between reallocations and channel usage. While the former incurs in more reallocations than the latter, the reverse is true with respect to channel usage. This trade-off becomes more dramatic if the maximum number of channels allowed in \lazy is increased.
\constant, on the other hand, which out of the three is the only algorithm providing theoretical guarantees, has low amortized reallocations (less than $0.5$) while maintaining a low competitive ratio when the system is loaded. For systems with frequent low load, the big channel should be implemented.
Figure~\ref{fig:objective} illustrates the combination of these factors in our objective function.

\begin{figure}[htb]
\begin{center}
\subfigure[][Current load as $H_r$.]{
\includegraphics[width=.9\textwidth]{./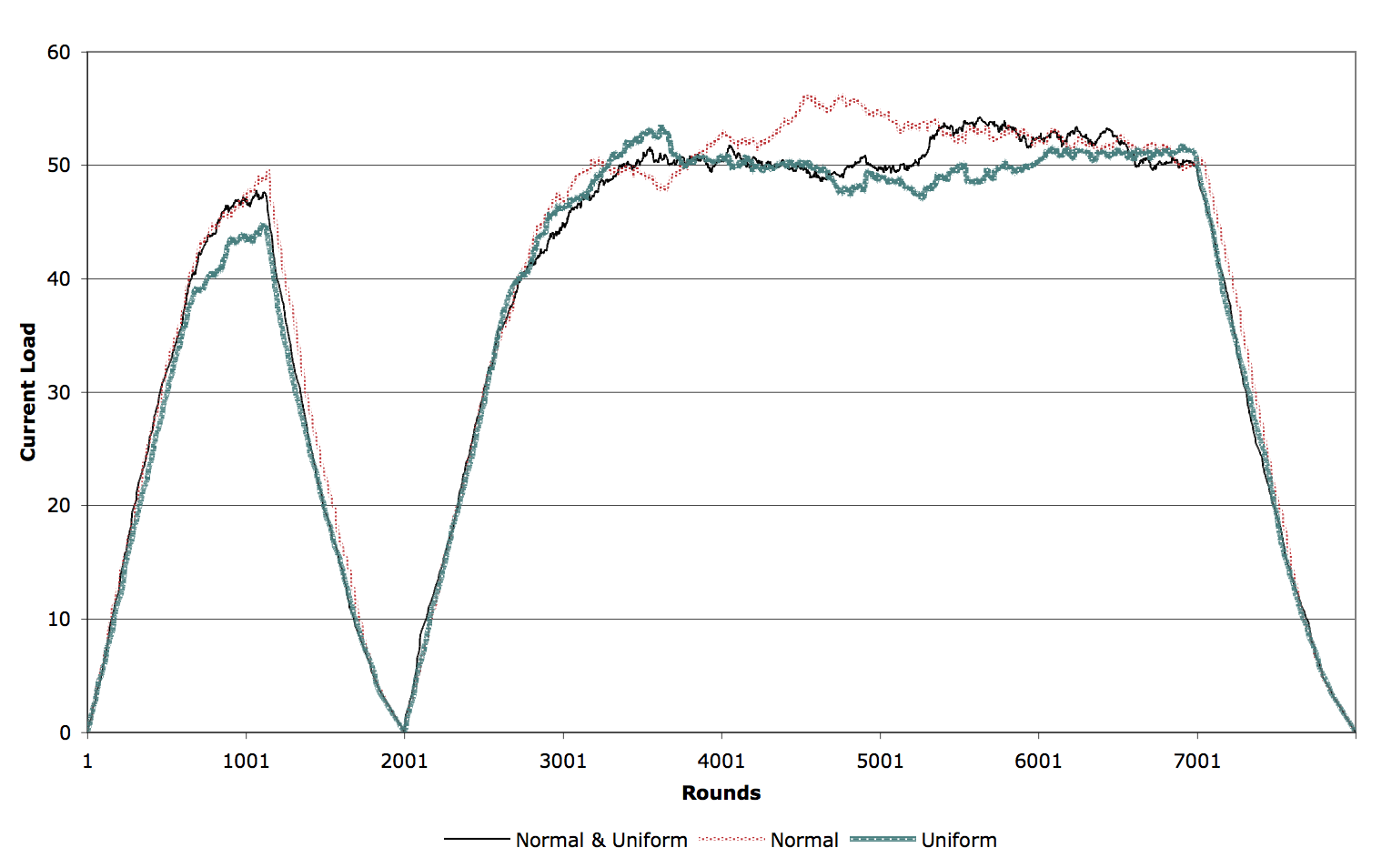}
\label{fig:load}
}
\subfigure[][Objective function.]{
\includegraphics[width=.9\textwidth]{./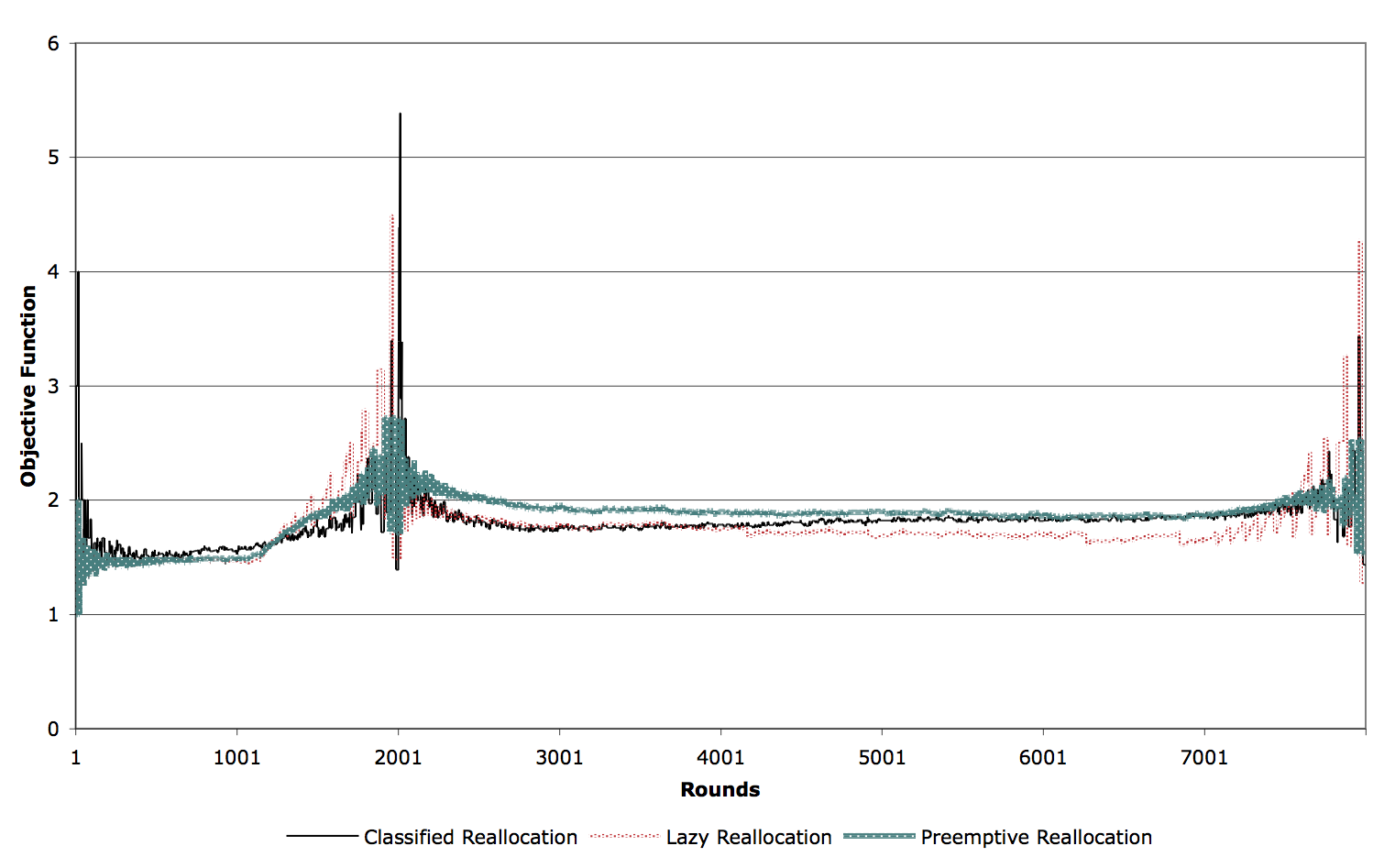}
\label{fig:objective}
}
\caption{Performance along rounds for normally distributed inputs.}
\label{fig1}
\end{center}
\end{figure}
\begin{figure}[htb]
\begin{center}
\subfigure[][Channel usage competitive ratio.]{
\includegraphics[width=.9\textwidth]{./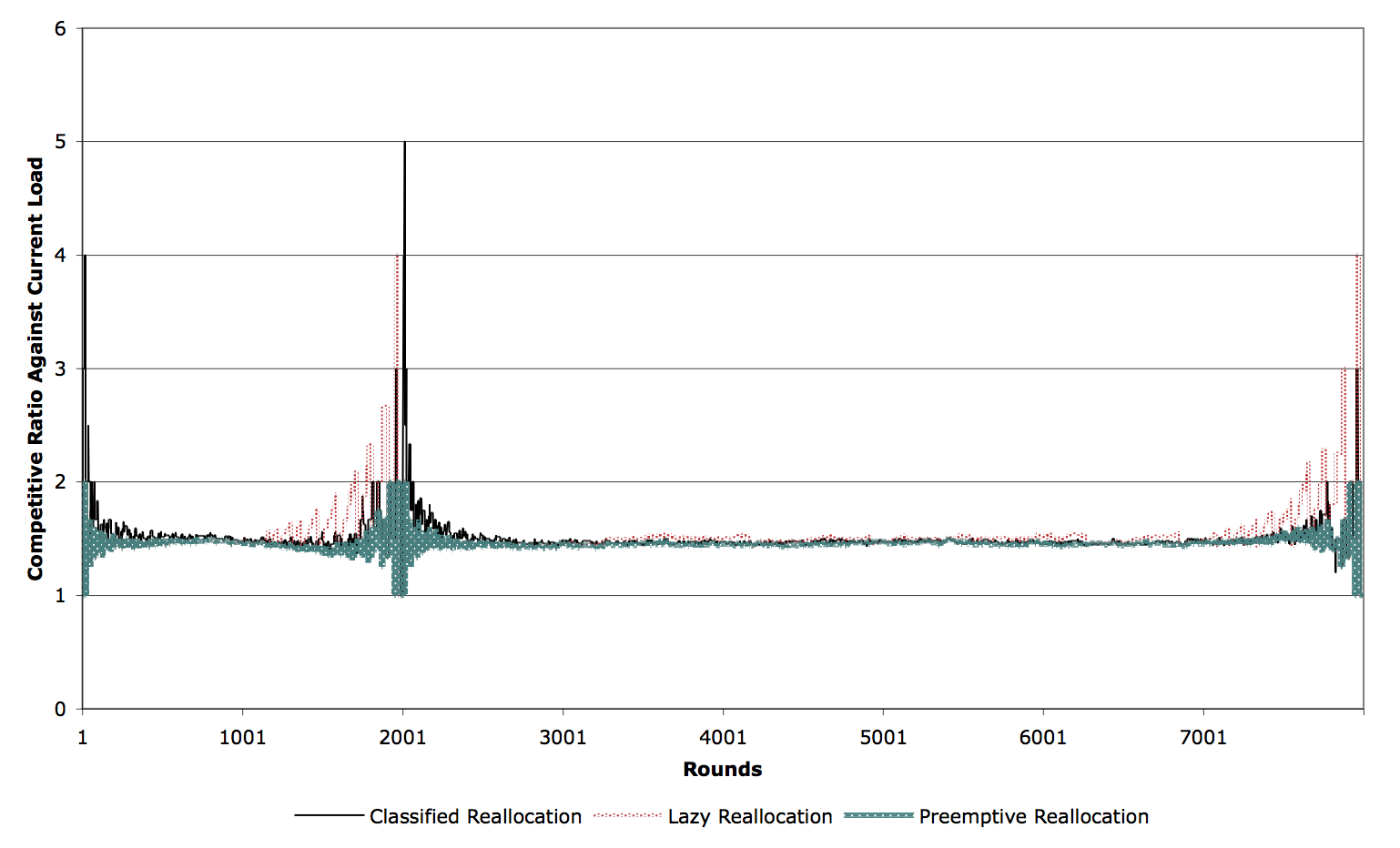}
\label{fig:ratio}
}
\subfigure[][Amortized reallocations.]{
\includegraphics[width=.9\textwidth]{./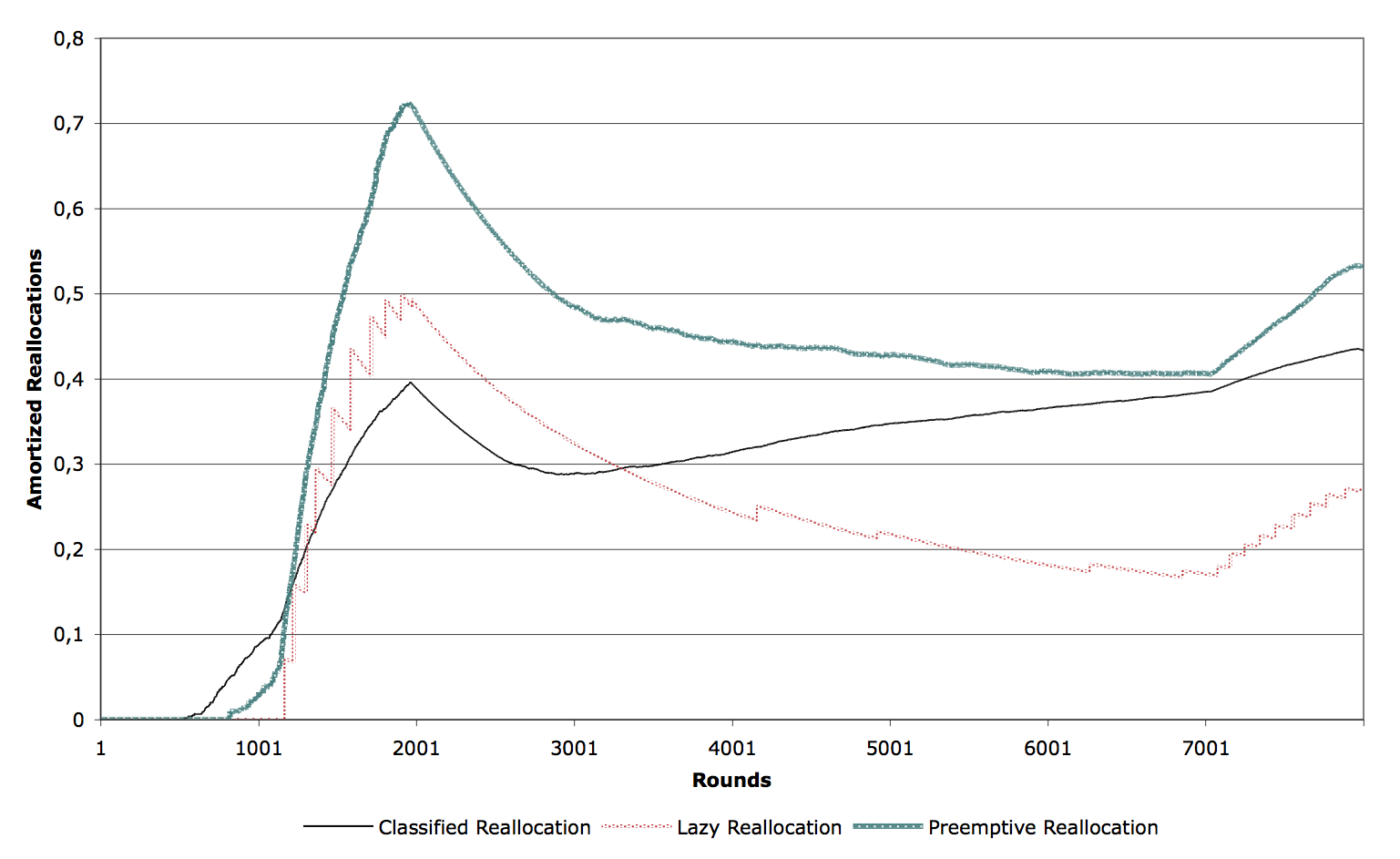}
\label{fig:realloc}
}
\caption{Performance along rounds for normally distributed inputs.}
\label{fig2}
\end{center}
\end{figure}

\begin{figure}[htb]
\begin{center}
\subfigure[][Uniform distribution of laxities]{
\includegraphics[width=.9\textwidth]{./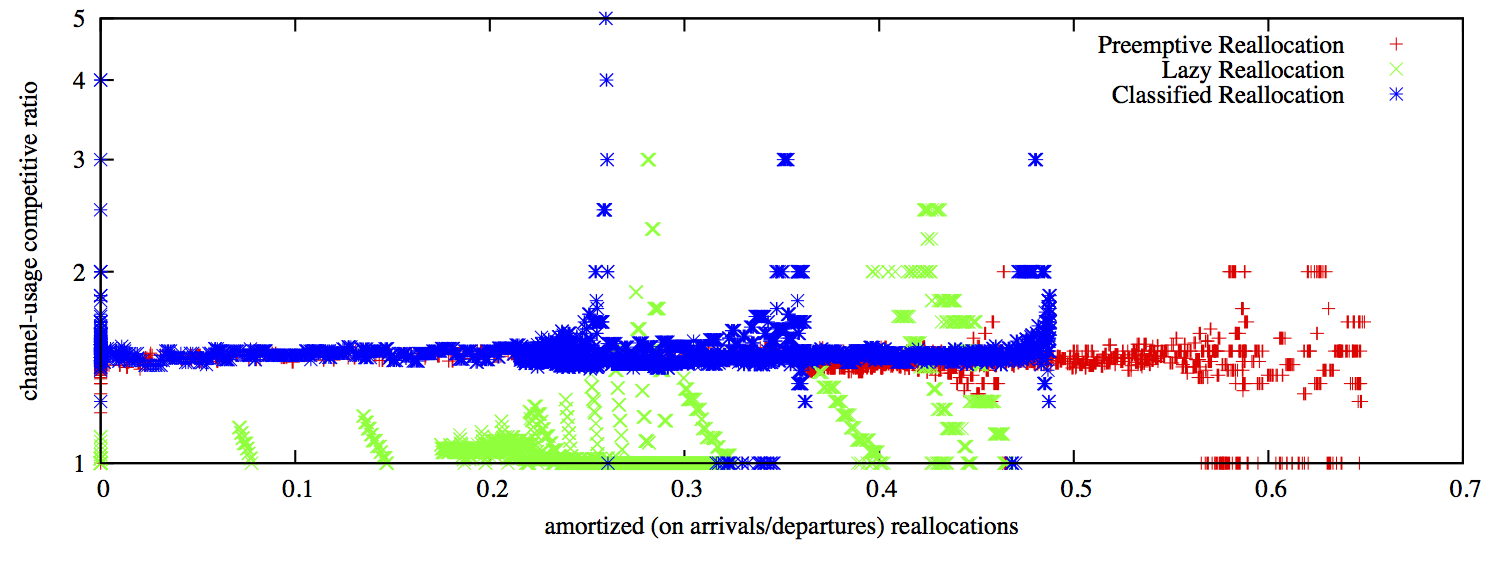}
}
\subfigure[][Normal distribution of laxities]{
\includegraphics[width=.9\textwidth]{./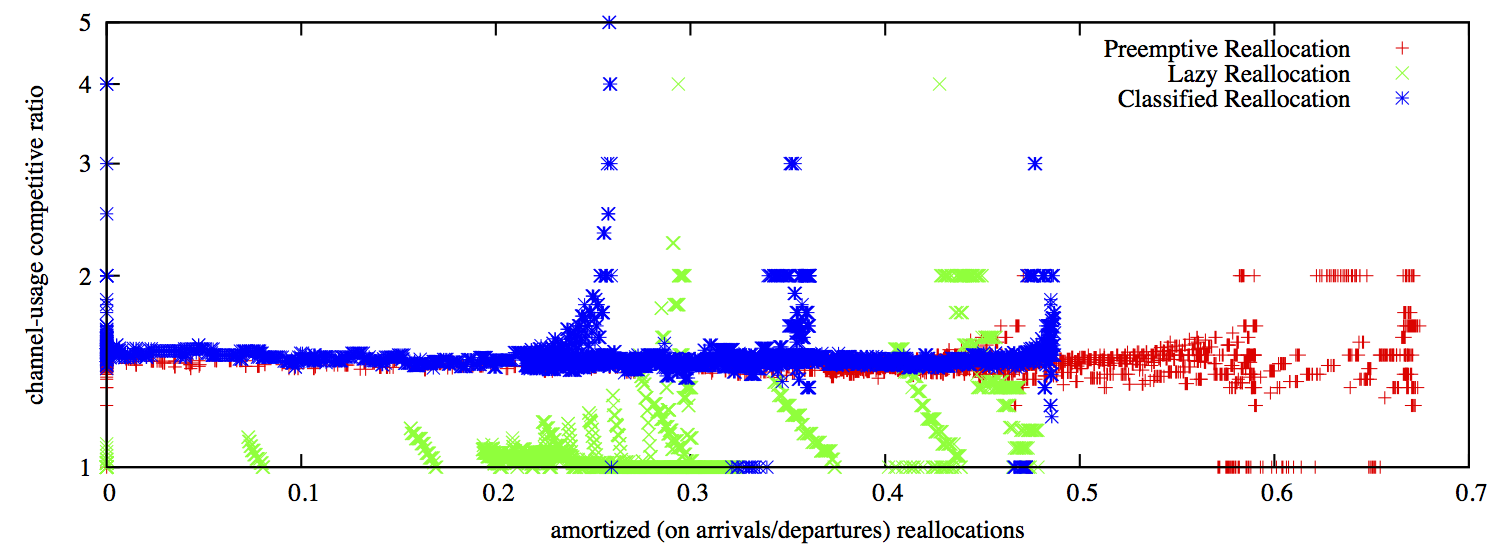}
}
\caption{Channel usage vs. reallocations}
\label{fig:HvsR}
\end{center}
\end{figure}

\bibliographystyle{plain}
\bibliography{FCLMT_WS_arxiv}

\end{document}